\documentclass[preprint,12pt]{elsarticle}

\usepackage{amssymb}
\usepackage{amsmath}
\usepackage{amsthm}
\usepackage{tabularx}
\usepackage{array}
\usepackage{multirow}
\usepackage{booktabs}
\usepackage{empheq}
\usepackage{graphicx}
\usepackage{xcolor}
\usepackage{hyperref}
\usepackage{lineno}

\newtheorem{theorem}{Theorem}
\newtheorem{lemma}{Lemma}

\theoremstyle{remark}
\newtheorem{remark}{Remark}

\newcommand{\D}{\mathbf{D}}

\newcommand{\U}{\mathbf{U}}
\newcommand{\F}{\mathbf{F}}
\newcommand{\G}{\mathbf{G}}

\newcommand{\W}{\mathbf{W}}

\newcommand{\Fone}{\mathbf{F}^{\mathrm{1D}}}

\begin{document}

\begin{frontmatter}

\title{Towards Exact Multidimensional Fluxes: A Structure-Preserving Data-Informed Framework for Riemann Solvers}

\author[label1]{Yucheng Zhang\corref{cor1}}
\author[label1,label3]{Clint Dawson}

\cortext[cor1]{Corresponding author. Email: yucheng@utexas.edu}

\affiliation[label1]{organization={Oden Institute for Computational Engineering and Sciences, University of Texas at Austin},
            city={Austin},
            state={Texas},
            country={USA}}


\affiliation[label3]{organization={Department of Aerospace Engineering and Engineering Mechanics, University of Texas at Austin},
            city={Austin},
            state={Texas},
            country={USA}}

\begin{abstract}

\! Finite-volume and discontinuous Galerkin methods update cells using the space--time flux integral over their faces, which typically depends on the solutions to local Riemann problems. In the multidimensional Riemann problem setting, this integral is difficult to evaluate exactly because of complex wave interactions. In this work, we show that the exact space--time flux integral is a polynomial in the time step and face dimensions. This motivates a learning-based approach, as only the mapping from the initial states to the polynomial coefficients needs to be learned. We propose a structure-preserving, data-informed Riemann solver framework that uses a neural network trained on high-resolution simulation data to learn this mapping. Key structural properties, including consistency and symmetry, are enforced by construction. Provided that the high-resolution simulations converge to the exact flux integrals and the network accurately predicts the coefficients, this framework yields near-exact flux integrals for use in numerical simulations.

We demonstrate the framework using classical implosion and explosion problems for the two-dimensional Euler equations. In an unsplit MUSCL--Hancock scheme, the learned Riemann solver remains stable at a CFL number of 0.9, whereas the same scheme using the exact one-dimensional Riemann solver fails. Compared with the latter scheme at CFL=0.45, the learned scheme is slightly more accurate and requires approximately 30\% less computational time.

\end{abstract}

\begin{keyword}
Multidimensional Riemann solver \sep Neural network \sep Hard constraint \sep Finite volume method \sep Corner flux \sep Conservation laws
\end{keyword}

\end{frontmatter}

\section{Introduction}
\label{sec:intro}

Godunov-type methods calculate the flux through each cell face by solving a local Riemann problem. In one spatial dimension, the time-averaged flux follows directly from the self-similar solution to a single Riemann problem. In multiple spatial dimensions, Riemann solvers can remain one-dimensional and are applied along the direction normal to each face at the quadrature points. When applied without additional transverse coupling, these face-normal Riemann solvers do not directly account for transverse wave interactions, leading to mesh imprinting and a stable time step that decreases with the number of spatial dimensions.

Multidimensional Riemann solvers address these issues. For linear acoustics, the multidimensional Riemann problem can be solved exactly~\cite{barsukow2022exact}. For nonlinear systems such as the Euler equations, however, the two-dimensional Riemann problem has no closed-form solution and is known only numerically~\cite{schulzrinne1993numerical,lax1998solution}. In a series of pioneering papers, Balsara and co-workers extended the HLL and HLLC Riemann solvers to multiple space dimensions on both structured and unstructured meshes~\cite{balsara2010multidimensional,balsara2012two,balsara2014multidimensional,balsara2015multidimensional}. As in one dimension, these solvers represent the wave pattern with constant states bounded by estimated wave speeds, making them applicable to general hyperbolic systems. They have become a key component of divergence-free schemes for magnetohydrodynamics~\cite{balsara2015divergence}. More recently, Gaburro et al.~\cite{gaburro2025general} introduced two complete multidimensional Riemann solvers for unstructured polygonal meshes: a multidimensional extension of the Osher--Solomon solver and an upwind solver based on residual distribution schemes. Other approaches to multidimensional Riemann solvers can be found in~\cite{ROE1986458, Colella1990, Saltzman1994, wendroff1999two,brio2001two,vides2015simple}. For nonlinear systems, all of these solvers approximate the multidimensional wave interaction with a simplified model. A flux integral close to the exact one would require a high-resolution numerical solution of the multidimensional Riemann problem at every cell vertex and at every time step, which is far too expensive for a simulation.

In recent years, neural networks have shown promise for accelerating computational fluid dynamics (CFD) simulations and for developing learning-based Riemann solvers~\cite{magiera2020constraint,ruggeri2022neural,gyrya2024machine,peyvan2024riemannonets,wu5295012rimnet}. In our previous work~\cite{zhang2026learning}, we proposed a hard-constrained neural Riemann solver (HCNRS) that enforces crucial constraints, such as consistency, symmetry, and scaling invariance. We showed that when HCNRS is integrated into the numerical schemes, it reproduces the exact Riemann solver solutions with high accuracy. To our knowledge, a data-driven approach to multidimensional Riemann solvers has not been studied. One obstacle is identifying a suitable learning target: the space--time flux integral appears to have a complicated dependence on the initial states, time step, and face size.

In this work, we show that the exact space--time flux integral in multidimensional Riemann problem settings is a polynomial in the time step and face size, provided that the waves from the vertex stay within the face during the time step, which holds for CFL numbers up to one. This reduces the learning task to predicting polynomial coefficients from the initial states, with no dependence on the time step or face size. Building on this result, we develop a neural Riemann solver for the two-dimensional Euler equations that enforces consistency, mirror symmetry, scaling invariance, and rotational invariance. The network is trained on high-resolution simulation data, computed once and offline. We test the solver in an unsplit MUSCL--Hancock scheme on the classical implosion and explosion problems.

The remainder of the paper is organized as follows. Section~\ref{sec:faceflux} derives the polynomial representation of the flux integral. Section~\ref{sec:network} presents the neural network architecture. Section~\ref{sec:results} describes the training setup, the numerical method and results, and Section~\ref{sec:conclusions} concludes the paper.


\section{Polynomial dependence of the space--time flux integral on the time step and face dimensions}
\label{sec:faceflux}

We consider isolated local Riemann problems with piecewise constant initial data. We assume the following properties:
\begin{itemize}
\item[(S)] \emph{Self-similarity.}
The solution has the form
$\mathbf U(\mathbf x,t)=\tilde{\mathbf U}(\mathbf x/t)$.

\item[(F)] \emph{Finite domain of dependence.}
There is a common speed bound $\lambda>0$ such that solutions agree at $(\mathbf x,t)$ whenever their initial data agree within distance $\lambda t$ of $\mathbf x$.
\end{itemize}
We also assume that the normal flux used below are well defined and integrable on every bounded subset of their similarity domains.

The restrictions on $\lambda\Delta t$ below ensure that the localized flux corrections remain within the integration region. On a Cartesian face, these restrictions can be expressed in terms of the directional Courant numbers $\nu_j=\lambda\Delta t/\Delta x_j$, requiring
$\nu_j\le 1$ in each tangential direction.

We first present detailed derivations in one, two, and three space dimensions, and then extend the result to an arbitrary number of space dimensions using an inclusion--exclusion decomposition of the face flux.

\subsection{One space dimension}

Let $\mathbf F^{\rm 1D}=\mathrm{ExactRS}(\mathbf U_L,\mathbf U_R)$ be the flux of the exact Riemann solver at an interface $x=0$ with left and right states $\mathbf U_L$ and $\mathbf U_R$. By (S), $\mathbf U(0,t)=\tilde{\mathbf U}(0)$ does not change in time, so the space--time flux integral is
\[
\mathbf I=\int_0^{\Delta t}\mathbf F\big(\mathbf U(0,t)\big)\,dt=\mathbf F^{\rm 1D}\,\Delta t ,
\]
a polynomial of degree one in $\Delta t$. The face is a point, so there is no spatial degree.

\subsection{Two space dimensions}

\begin{figure}[t]
\centering
\includegraphics[width=\textwidth]{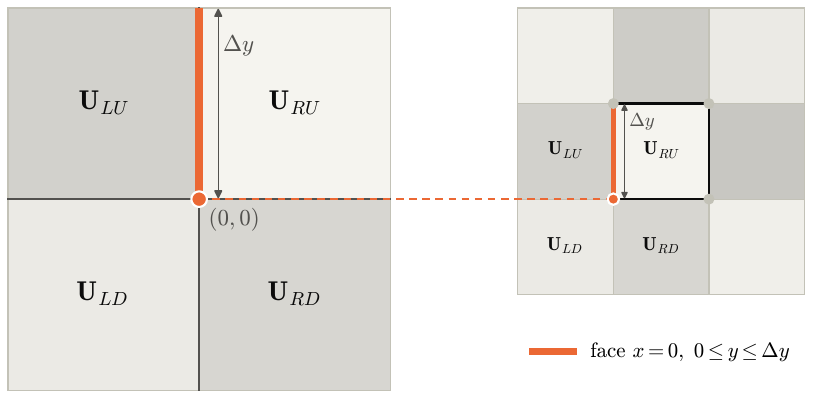}
\caption{Left: the 2D Riemann problem at a vertex at the origin, with constant states $\mathbf U_{RU},\mathbf U_{LU},\mathbf U_{LD},\mathbf U_{RD}$ in the four quadrants. Right: a structured mesh, on which a finite volume scheme updates the white cell. The left face $x=0$, $0\le y\le\Delta y$ (orange) of this cell lies on the upward half-axis of the Riemann problem at its lower vertex (dashed line). Near this vertex, the flux through the face is given by the Riemann problem of the left panel. The other vertices of the cell (grey dots) have 2D Riemann problems of their own.}

\label{fig:2d}
\end{figure}

Consider a vertex at the origin, where four cells meet with constant states $\mathbf U_{LU},\mathbf U_{RU},\mathbf U_{LD},\mathbf U_{RD}$ in the four quadrants (left/right, up/down). Take the face $x=0$, $0\le y\le\Delta y$, and let $\mathbf F$ be the flux in the $x$-direction. We consider the space-time flux integral $\mathbf I$ on this face with time step $ \Delta t$.

Figure~\ref{fig:2d} shows where this setting arises. A finite volume
scheme on a structured mesh updates the white cell using the flux
through each face, integrated over the face and over the time step.
The four cells meeting at the lower endpoint of the left face supply
the four constant states of the local Riemann problem. Computing the flux through the mesh face in the finite volume scheme also requires accounting for the Riemann problem
at its upper endpoint (Section \ref{sec:scheme}).

By (S), define
\begin{equation}\label{eq:f2d}
\mathbf f(\eta) \equiv \mathbf F\big(\mathbf U(x=0,y,t)\big),\qquad \eta=\frac yt .
\end{equation}
By (F), a point of the face with $y>\lambda t$ does not feel the cells below the vertex. It sees only the one-dimensional Riemann problem between $\mathbf U_{LU}$ and $\mathbf U_{RU}$, so
\begin{equation}\label{eq:far2d}
\mathbf f(\eta)=\mathbf F^{\rm 1D}=\mathrm{ExactRS}(\mathbf U_{LU},\mathbf U_{RU})\quad\text{for }\eta>\lambda .
\end{equation}
We define the \emph{integrated flux difference}
\begin{equation}\label{eq:D}
\mathbf D=\int_0^\infty\big(\mathbf f(\eta)-\mathbf F^{\rm 1D}\big)\,d\eta .
\end{equation}
By \eqref{eq:far2d}, the integrand vanishes for $\eta>\lambda$, so $\mathbf D$ is finite. The integrated flux difference depends only on the four initial states. It does not depend on $t$, $\Delta t$, $\Delta x$ or $\Delta y$.

\begin{lemma}[Two dimensions]\label{lem:2d}
If $\lambda\Delta t\le\Delta y$, then for every $0<t\le\Delta t$
\begin{equation}\label{eq:2d-t}
\int_0^{\Delta y}\mathbf F\big(\mathbf U(0,y,t)\big)\,dy=\Delta y\,\mathbf F^{\rm 1D}+t\,\mathbf D ,
\end{equation}
and the space--time flux integral is
\begin{equation}\label{eq:2d-I}
\mathbf I=\int_0^{\Delta t}\!\!\int_0^{\Delta y}\mathbf F\big(\mathbf U(0,y,t)\big)\,dy\,dt=\Delta y\,\Delta t\,\mathbf F^{\rm 1D}+\frac{\Delta t^2}{2}\,\mathbf D 
\end{equation}
\end{lemma}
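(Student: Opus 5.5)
The plan is to fix $0<t\le\Delta t$ and prove \eqref{eq:2d-t} first, by changing variables to the similarity coordinate; \eqref{eq:2d-I} will then follow by integrating in time.

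By (S) and the definition \eqref{eq:f2d}, the integrand at time $t$ is $\mathbf F(\mathbf U(0,y,t))=\mathbf f(y/t)$. Substituting $\eta=y/t$, so that $dy=t\,d\eta$, gives
\[
\int_0^{\Delta y}\mathbf F\big(\mathbf U(0,y,t)\big)\,dy=t\int_0^{\Delta y/t}\mathbf f(\eta)\,d\eta .
\]
Next we add and subtract the constant $\mathbf F^{\rm 1D}$ inside the integral. This splits the right-hand side as $t\cdot(\Delta y/t)\,\mathbf F^{\rm 1D}=\Delta y\,\mathbf F^{\rm 1D}$ plus $t\int_0^{\Delta y/t}\big(\mathbf f(\eta)-\mathbf F^{\rm 1D}\big)\,d\eta$. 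Integrability on bounded sets, which was assumed, justifies the substitution and the splitting.

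The key step is to identify the remaining integral with $\mathbf D$. The hypothesis $\lambda\Delta t\le\Delta y$ together with $t\le\Delta t$ gives $\Delta y/t\ge\Delta y/\Delta t\ge\lambda$. By \eqref{eq:far2d}, the integrand $\mathbf f(\eta)-\mathbf F^{\rm 1D}$ vanishes for $\eta>\lambda$. Hence extending the upper limit from $\Delta y/t$ to $\infty$ adds nothing, and the integral equals $\mathbf D$ as defined in \eqref{eq:D}. This yields \eqref{eq:2d-t}.

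The one delicate point is the boundary case $\Delta y/t=\lambda$. It is harmless because \eqref{eq:far2d} already covers all $\eta>\lambda$, and the single point $\eta=\lambda$ has measure zero. I would also note that the upper endpoint $y=\Delta y$ of the face is not affected by the vertex, since $\Delta y\ge\lambda t$. Strictly, the formula concerns the flux of the isolated vertex Riemann problem; any influence of the upper vertex is handled separately later in the paper.

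Finally, integrate \eqref{eq:2d-t} over $t\in(0,\Delta t]$. Since $\mathbf F^{\rm 1D}$ and $\mathbf D$ do not depend on $t$, this gives $\Delta y\,\Delta t\,\mathbf F^{\rm 1D}+\tfrac{\Delta t^2}{2}\mathbf D$, which is \eqref{eq:2d-I}. Exchanging the order of integration needs only Fubini's theorem on the bounded rectangle $[0,\Delta t]\times[0,\Delta y]$, and the assumed integrability of the flux suffices for that.
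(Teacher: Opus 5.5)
Your proposal is correct and follows essentially the same route as the paper's proof. Both substitute $\eta=y/t$, split off the constant $\mathbf F^{\rm 1D}$, use $\Delta y/t\ge\lambda$ with \eqref{eq:far2d} to extend the upper limit to $\infty$ and obtain $\mathbf D$, and then integrate \eqref{eq:2d-t} over the time step.
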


\begin{proof}
With \eqref{eq:f2d},
\[
\int_0^{\Delta y}\mathbf F\big(\mathbf U(0,y,t)\big)\,dy=t\int_0^{\Delta y/t}\mathbf f(\eta)\,d\eta .
\]
Writing $\mathbf f=\mathbf F^{\rm 1D}+(\mathbf f-\mathbf F^{\rm 1D})$,
\[
t\int_0^{\Delta y/t}\mathbf f(\eta)\,d\eta=\Delta y\,\mathbf F^{\rm 1D}+t\int_0^{\Delta y/t}\big(\mathbf f(\eta)-\mathbf F^{\rm 1D}\big)\,d\eta .
\]
Since $t\le\Delta t\le\Delta y/\lambda$, the upper limit satisfies $\Delta y/t\ge\lambda$. The integrand vanishes beyond $\lambda$ by \eqref{eq:far2d}, hence:
\[
\int_0^{\Delta y/t}\big(\mathbf f(\eta)-\mathbf F^{\rm 1D}\big)\,d\eta=\int_0^{\infty}\big(\mathbf f(\eta)-\mathbf F^{\rm 1D}\big)\,d\eta=\mathbf D .
\]
This proves \eqref{eq:2d-t}. Integrating it over the time step proves \eqref{eq:2d-I}
\end{proof}

\begin{remark}[Unstructured meshes]\label{rem:unstructured}
The proof of Lemma~\ref{lem:2d} uses only (S) and (F), so it also applies at a vertex of a polygonal mesh, with a constant state in each sector between two edges. Along a straight face of length $\Delta s$ issuing from the vertex, the normal flux equals that of the one-dimensional Riemann problem between the two adjacent states beyond a distance proportional to $\lambda t$. For a sufficiently small time step, therefore, $\mathbf I=\Delta s\,\Delta t\,\mathbf F^{\rm 1D}+\frac{\Delta t^2}{2}\,\mathbf D$, where $\mathbf D$ is defined by \eqref{eq:D} with the normal flux and depends on the initial states and the angles between the edges.
\end{remark}

\subsection{Three space dimensions}

Consider a vertex $v$ at the origin, where eight cells meet with constant states $\mathbf U_{\pm\pm\pm}$ in the eight octants, the three signs being those of $x$, $y$ and $z$. Take the face $x=0$, $0\le y\le\Delta y$, $0\le z\le\Delta z$, and let $\mathbf F$ be the flux in the $x$-direction. 

Figure~\ref{fig:3d} shows this setting. What is new in three dimensions is that two sides of the face lie on mesh edges through $v$, namely $e_1=\{x=z=0,\ y>0\}$ and $e_2=\{x=y=0,\ z>0\}$, and four cells meet around each of these edges.

\begin{figure}[t]
\centering
\includegraphics[width=\textwidth]{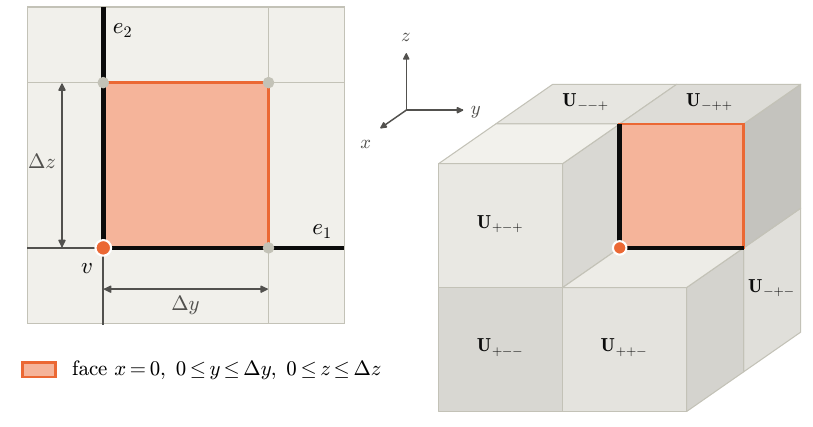}
\caption{Left: the face $x=0$, $0\le y\le\Delta y$, $0\le z\le\Delta z$ (orange), seen from $x>0$. Its lower-left corner is the vertex $v$ at the origin, and its two sides through $v$ lie on the mesh edges $e_1$ and $e_2$ (thick lines). The thin lines are other mesh edges in the plane $x=0$, and the grey dots are the other vertices of the face. Right: the Riemann problem at $v$, with constant states $\mathbf U_{\pm\pm\pm}$ in the eight octants; the signs are those of $x$, $y$ and $z$. }
\label{fig:3d}

\end{figure}

By (S), define
\begin{equation}\label{eq:f3d}
\mathbf f_v(\eta,\zeta) \equiv \mathbf F\big(\mathbf U(0,y,z,t)\big), \qquad \eta=y/t,  \qquad \zeta=z/t
\end{equation}

By (F), a point of the face with $y>\lambda t$ does not feel the four cells with $y<0$. It sees only the \emph{edge problem} of $e_1$, and its flux on the face is a function $\mathbf f_{e_1}(\zeta)$ of $\zeta$ alone. Likewise, a point with $z>\lambda t$ sees only the edge problem of $e_2$, whose flux on the face is a function $\mathbf f_{e_2}(\eta)$. As in \eqref{eq:far2d}, a point of an edge problem farther than $\lambda t$ from its edge sees only the one-dimensional problem between $\mathbf U_{-++}$ and $\mathbf U_{+++}$. Hence
\begin{equation}\label{eq:far3d}
\begin{aligned}
&\mathbf f_v(\eta,\zeta)=\mathbf f_{e_1}(\zeta)\ \text{ for }\eta>\lambda, &\qquad&\mathbf f_{e_1}(\zeta)=\mathbf F^{\rm 1D}\ \text{ for }\zeta>\lambda,\\
&\mathbf f_v(\eta,\zeta)=\mathbf f_{e_2}(\eta)\ \text{ for }\zeta>\lambda, &&\mathbf f_{e_2}(\eta)=\mathbf F^{\rm 1D}\ \text{ for }\eta>\lambda,
\end{aligned}
\end{equation}
where $\mathbf F^{\rm 1D}=\mathrm{ExactRS}(\mathbf U_{-++},\mathbf U_{+++})$.

We define the integrated flux differences of the two edges and of the vertex,
\begin{equation}\label{eq:D3d}
\begin{aligned}
\mathbf D_{e_1}&=\int_0^\infty\big(\mathbf f_{e_1}(\zeta)-\mathbf F^{\rm 1D}\big)\,d\zeta,\qquad
\mathbf D_{e_2}=\int_0^\infty\big(\mathbf f_{e_2}(\eta)-\mathbf F^{\rm 1D}\big)\,d\eta,\\
\mathbf D_v&=\int_0^\infty\!\!\int_0^\infty\big(\mathbf f_v(\eta,\zeta)-\mathbf f_{e_1}(\zeta)-\mathbf f_{e_2}(\eta)+\mathbf F^{\rm 1D}\big)\,d\eta\,d\zeta .
\end{aligned}
\end{equation}
$\mathbf D_{e_1}$ and $\mathbf D_{e_2}$ are the integrated flux differences for the two edge problems, as in \eqref{eq:D}. $\mathbf D_v$ is new in three dimensions. By \eqref{eq:far3d}, the integrands of $\mathbf D_{e_1}$ and $\mathbf D_{e_2}$ vanish for $\zeta>\lambda$ and $\eta>\lambda$, respectively. If $\eta>\lambda$, then $\mathbf f_v=\mathbf f_{e_1}$ and $\mathbf f_{e_2}=\mathbf F^{\rm 1D}$, so the integrand of $\mathbf D_v$ vanishes. Likewise, it vanishes for $\zeta>\lambda$. So $\mathbf D_v$ is an integral over the square $0\le\eta,\zeta\le\lambda$, and all three are finite.

\begin{lemma}[Three dimensions]\label{lem:3d}
If $\lambda\Delta t\le\min(\Delta y,\Delta z)$, then for every $0<t\le\Delta t$
\begin{equation}\label{eq:3d-t}
\begin{split}
&\int_0^{\Delta z}\!\!\int_0^{\Delta y}\mathbf F\big(\mathbf U(0,y,z,t)\big)\,dy\,dz\\
&\qquad=\Delta y\,\Delta z\,\mathbf F^{\rm 1D}+t\,\big(\Delta y\,\mathbf D_{e_1}+\Delta z\,\mathbf D_{e_2}\big)+t^2\,\mathbf D_v ,
\end{split}
\end{equation}
and the space--time flux integral is
\begin{equation}\label{eq:3d-I}
\begin{split}
\mathbf I&=\int_0^{\Delta t}\!\!\int_0^{\Delta z}\!\!\int_0^{\Delta y}\mathbf F\big(\mathbf U(0,y,z,t)\big)\,dy\,dz\,dt\\
&=\Delta y\,\Delta z\,\Delta t\,\mathbf F^{\rm 1D}+\frac{\Delta t^2}{2}\big(\Delta y\,\mathbf D_{e_1}+\Delta z\,\mathbf D_{e_2}\big)+\frac{\Delta t^3}{3}\,\mathbf D_v .
\end{split}
\end{equation}
\end{lemma}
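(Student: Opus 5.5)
The proof mirrors Lemma~\ref{lem:2d}, with an inclusion--exclusion split of $\mathbf f_v$ in place of the single subtraction. First I use (S) and the definition \eqref{eq:f3d} to change variables $y=t\eta$, $z=t\zeta$, which gives
\[
\int_0^{\Delta z}\!\!\int_0^{\Delta y}\mathbf F\big(\mathbf U(0,y,z,t)\big)\,dy\,dz=t^2\int_0^{Z}\!\!\int_0^{Y}\mathbf f_v(\eta,\zeta)\,d\eta\,d\zeta,\qquad Y=\Delta y/t,\ Z=\Delta z/t .
\]
The hypothesis $\lambda\Delta t\le\min(\Delta y,\Delta z)$ and $t\le\Delta t$ give $Y\ge\lambda$ and $Z\ge\lambda$. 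These are the only inequalities I need.

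Next I decompose the integrand as
\[
\mathbf f_v=\mathbf F^{\rm 1D}+\big(\mathbf f_{e_1}(\zeta)-\mathbf F^{\rm 1D}\big)+\big(\mathbf f_{e_2}(\eta)-\mathbf F^{\rm 1D}\big)+\big(\mathbf f_v-\mathbf f_{e_1}-\mathbf f_{e_2}+\mathbf F^{\rm 1D}\big)
\]
and integrate each piece over the rectangle $[0,Y]\times[0,Z]$.

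\begin{itemize}
\item The constant term contributes $t^2YZ\,\mathbf F^{\rm 1D}=\Delta y\,\Delta z\,\mathbf F^{\rm 1D}$.
\item The second term depends only on $\zeta$. Integrating in $\eta$ gives the factor $Y$. The remaining $\zeta$-integral over $[0,Z]$ equals $\mathbf D_{e_1}$, because by \eqref{eq:far3d} the integrand vanishes for $\zeta>\lambda$ and $Z\ge\lambda$. This term therefore contributes $t^2Y\,\mathbf D_{e_1}=t\,\Delta y\,\mathbf D_{e_1}$.
\item Symmetrically, the third term contributes $t\,\Delta z\,\mathbf D_{e_2}$.
\item As noted after \eqref{eq:D3d}, the last integrand vanishes whenever $\eta>\lambda$ or $\zeta>\lambda$. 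Since the rectangle contains the square $[0,\lambda]^2$, its integral over the rectangle equals $\mathbf D_v$, and this term contributes $t^2\mathbf D_v$.
\end{itemize}

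Summing the four contributions gives \eqref{eq:3d-t}. Integrating \eqref{eq:3d-t} in $t$ over $[0,\Delta t]$, using $\int_0^{\Delta t}t\,dt=\Delta t^2/2$ and $\int_0^{\Delta t}t^2\,dt=\Delta t^3/3$, gives \eqref{eq:3d-I}.

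The only delicate point is justifying the splitting of the integral into the four pieces. Each piece must be integrable on the bounded rectangle, which follows from the standing integrability assumption on the normal fluxes over bounded subsets of the similarity domains (applied to $\mathbf f_v$, $\mathbf f_{e_1}$, $\mathbf f_{e_2}$). One must also check that the vanishing regions from \eqref{eq:far3d} are used consistently: for instance, for $\eta>\lambda$ we have both $\mathbf f_v=\mathbf f_{e_1}$ and $\mathbf f_{e_2}=\mathbf F^{\rm 1D}$, so the vertex correction cancels exactly there.
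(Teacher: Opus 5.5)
Your proposal is correct and follows the paper's proof essentially step for step. It uses the same change of variables to the rectangle $[0,\Delta y/t]\times[0,\Delta z/t]$ and the same four-term inclusion--exclusion split of $\mathbf f_v$, and it applies \eqref{eq:far3d} in the same way to replace each truncated integral by $\mathbf D_{e_1}$, $\mathbf D_{e_2}$, $\mathbf D_v$ before integrating in time (your explicit remark on integrability of the individual pieces just makes visible what the paper leaves to its standing assumption).
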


\begin{proof}
Let $R_t=[0,\Delta y/t]\times[0,\Delta z/t]$. With \eqref{eq:f3d},
\[
\int_0^{\Delta z}\!\!\int_0^{\Delta y}\mathbf F\big(\mathbf U(0,y,z,t)\big)\,dy\,dz=t^2\iint_{R_t}\mathbf f_v(\eta,\zeta)\,d\eta\,d\zeta .
\]
Since $t\le\Delta t\le\min(\Delta y,\Delta z)/\lambda$, both sides of $R_t$ are at least $\lambda$, so $R_t$ contains the square $[0,\lambda]^2$. Write $\mathbf f_v$ as the sum of four parts,
\[
\mathbf f_v=\mathbf F^{\rm 1D}+\big(\mathbf f_{e_1}-\mathbf F^{\rm 1D}\big)+\big(\mathbf f_{e_2}-\mathbf F^{\rm 1D}\big)+\big(\mathbf f_v-\mathbf f_{e_1}-\mathbf f_{e_2}+\mathbf F^{\rm 1D}\big),
\]
and integrate them over $R_t$ one at a time.
\begin{itemize}
\item The constant $\mathbf F^{\rm 1D}$ gives $\Delta y\,\Delta z\,\mathbf F^{\rm 1D}$.
\item The second part depends only on $\zeta$, so the $\eta$-integral gives the factor $\Delta y/t$. The upper limit $\Delta z/t\ge\lambda$ of the $\zeta$-integral can be replaced by $\infty$, because the integrand vanishes beyond $\lambda$:
\[
t^2\,\frac{\Delta y}{t}\int_0^{\Delta z/t}\big(\mathbf f_{e_1}(\zeta)-\mathbf F^{\rm 1D}\big)\,d\zeta=t\,\Delta y\,\mathbf D_{e_1}.
\]
\item Likewise, the third part gives $t\,\Delta z\,\mathbf D_{e_2}$.
\item The fourth part vanishes outside the square $[0,\lambda]^2\subset R_t$, so $R_t$ can be replaced by the whole quarter plane, and it gives $t^2\,\mathbf D_v$.
\end{itemize}
Adding the four parts proves \eqref{eq:3d-t}. Integrating it over the time step proves \eqref{eq:3d-I}
\end{proof}

\subsection{Any number of space dimensions}
For $d\ge 2$, let $\mathbf F$ be the flux in the $x_1$-direction of $\mathbf x=(x_1,\dots,x_d)$, and let $\mathcal T=\{2,\dots,d\}$ be the set of tangential directions.
Consider an isolated Riemann problem with one constant state in each of the $2^d$ orthants meeting at the origin, and the face
\[
P=\{\mathbf x:\ x_1=0,\ 0\le x_j\le\Delta x_j
\ \text{for }j\in\mathcal T\},
\]
which has the origin $v$ as a corner. For $S\subseteq\mathcal T$, write $\eta_S=(x_j/t)_{j\in S}$.

For each $S\subseteq\mathcal T$, let $P_S$ be the part of $P$ where $x_j=0$ for all $j\in S$. Its measure in its own dimension is
\[
|P_S|=\prod_{j\in\mathcal T\setminus S}\Delta x_j,
\]
with $|P_{\mathcal T}|=1$. Thus $P_\emptyset=P$ is the face itself and $P_{\mathcal T}=\{v\}$ is the vertex.

To define the local Riemann problem associated with $P_S$, take the $2^{|S|+1}$ states adjacent to its relative interior and extend them uniformly in the directions $\mathcal T\setminus S$. These states have positive signs in all directions $j\in\mathcal T\setminus S$. By (S), the resulting flux on $x_1=0$ is a function $\mathbf f_S(\eta_S)$. In particular, $\mathbf f_\emptyset=\mathbf F^{\rm 1D}$, and $\mathbf f_{\mathcal T}$ is the flux of the vertex problem.

By (F), a point with $x_j>\lambda t$ does not feel the cells with $x_j<0$. Removing these cells from the problem of $P_S$ gives the problem of $P_{S\setminus\{j\}}$, so
\begin{equation}\label{eq:fard}
\mathbf f_S(\eta_S)=\mathbf f_{S\setminus\{j\}}(\eta_{S\setminus\{j\}})\qquad\text{if } j\in S\text{ and }\eta_j>\lambda .
\end{equation}
This contains \eqref{eq:far2d} and \eqref{eq:far3d}.

We define the integrated flux difference of $P_S$ by inclusion--exclusion,
\begin{equation}\label{eq:Dd}
\begin{aligned}
\mathbf D_S
&=\int_{[0,\infty)^{|S|}}
\boldsymbol\psi_S(\eta_S)\,d\eta_S,\\
\boldsymbol\psi_S(\eta_S)
&=\sum_{R\subseteq S}
(-1)^{|S|-|R|}\,\mathbf f_R(\eta_R),
\end{aligned}
\end{equation}
with $\mathbf D_\emptyset=\mathbf F^{\rm 1D}$.
For nonempty $S$, fix $j\in S$ and suppose $\eta_j>\lambda$. The terms in $\boldsymbol\psi_S$ then cancel in pairs: for each $R\subseteq S\setminus\{j\}$, the terms indexed by $R$ and $R\cup\{j\}$ have opposite signs and equal flux values by \eqref{eq:fard}. Hence $\boldsymbol\psi_S$ vanishes outside $[0,\lambda]^{|S|}$. Together with the assumed integrability of the flux profiles, this ensures that $\mathbf D_S$ is finite.

\begin{theorem}[Any number of space dimensions]\label{thm:nd}
If $\lambda\Delta t\le\min_{j\in\mathcal T}\Delta x_j$, then for every $0<t\le\Delta t$
\begin{equation}\label{eq:nd-t}
\int_P\mathbf F\big(\mathbf U(\mathbf x,t)\big)\,dA=\sum_{S\subseteq\mathcal T}t^{|S|}\,|P_S|\,\mathbf D_S ,
\end{equation}
where $dA=dx_2\cdots dx_d$, and the space--time flux integral is
\begin{equation}\label{eq:nd-I}
\mathbf I=\int_0^{\Delta t}\!\!\int_P\mathbf F\big(\mathbf U(\mathbf x,t)\big)\,dA\,dt=\sum_{S\subseteq\mathcal T}\frac{\Delta t^{|S|+1}}{|S|+1}\,|P_S|\,\mathbf D_S .
\end{equation}
\end{theorem}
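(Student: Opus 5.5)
The plan follows the proof of Lemma~\ref{lem:3d}, with the inclusion--exclusion decomposition doing the work of the four-part splitting. First I invert the definition \eqref{eq:Dd} of $\boldsymbol\psi_S$ by Möbius inversion on the Boolean lattice of subsets of $\mathcal T$, which gives
\[
\mathbf f_{\mathcal T}(\eta_{\mathcal T})=\sum_{S\subseteq\mathcal T}\boldsymbol\psi_S(\eta_S).
\]
To check this, substitute the definition of $\boldsymbol\psi_S$ and exchange the sums. The coefficient of $\mathbf f_R$ becomes $\sum_{R\subseteq S\subseteq\mathcal T}(-1)^{|S|-|R|}$, which equals $1$ if $R=\mathcal T$ and $0$ otherwise. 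For $S=\emptyset$ we have $\boldsymbol\psi_\emptyset=\mathbf F^{\rm 1D}=\mathbf D_\emptyset$, so the convention $\mathbf D_\emptyset=\mathbf F^{\rm 1D}$ is consistent.

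Next, by (S) the flux on the face at time $t$ is $\mathbf F(\mathbf U(\mathbf x,t))=\mathbf f_{\mathcal T}(\eta_{\mathcal T})$ with $\eta_j=x_j/t$. The substitution $x_j=t\eta_j$ for $j\in\mathcal T$ gives
\[
\int_P\mathbf F\big(\mathbf U(\mathbf x,t)\big)\,dA=t^{d-1}\int_{R_t}\mathbf f_{\mathcal T}\,d\eta_{\mathcal T},\qquad R_t=\prod_{j\in\mathcal T}[0,\Delta x_j/t].
\]
I then insert the decomposition and integrate each $\boldsymbol\psi_S$ over $R_t$ separately; the integrability assumption justifies splitting the integral. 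Since $\boldsymbol\psi_S$ depends only on $\eta_S$, Fubini gives the factor $\prod_{j\notin S}\Delta x_j/t=|P_S|\,t^{-(d-1-|S|)}$ from the free directions. The remaining factor is $\int_{\prod_{j\in S}[0,\Delta x_j/t]}\boldsymbol\psi_S\,d\eta_S$.

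The hypothesis $t\le\Delta t\le\min_j\Delta x_j/\lambda$ gives $\Delta x_j/t\ge\lambda$. The support property established just before the theorem says $\boldsymbol\psi_S$ vanishes outside $[0,\lambda]^{|S|}$. Hence the box can be replaced by $[0,\infty)^{|S|}$, and the factor equals $\mathbf D_S$. Multiplying the powers of $t$ leaves $t^{|S|}|P_S|\mathbf D_S$, and summing over $S$ proves \eqref{eq:nd-t}. Integrating $t^{|S|}$ from $0$ to $\Delta t$ then gives \eqref{eq:nd-I}.

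The main point needing care is the support claim, together with the bookkeeping that makes $\mathbf f_R(\eta_R)$, regarded as a function on the whole quarter-space $[0,\infty)^{d-1}$, agree with the actual vertex flux. This rests on the pairwise cancellation already argued from \eqref{eq:fard}, so I would take it as given. The Möbius inversion is the only genuinely new algebraic step relative to Lemma~\ref{lem:3d}, and it is standard.
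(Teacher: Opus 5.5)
Your proposal is correct and follows essentially the same route as the paper: the inclusion--exclusion identity $\mathbf f_{\mathcal T}=\sum_{S\subseteq\mathcal T}\boldsymbol\psi_S$, integration of each $\boldsymbol\psi_S$ with the free directions contributing $|P_S|$, the scaling $x_j=t\eta_j$, and the support of $\boldsymbol\psi_S$ in $[0,\lambda]^{|S|}$ to extend the box to $[0,\infty)^{|S|}$. The only difference is cosmetic: you rescale all tangential directions first and then track the powers of $t$, whereas the paper rescales only the directions in $S$.
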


\begin{proof}
By (S), $\mathbf F(\mathbf U(\mathbf x,t))=\mathbf f_{\mathcal T}(\eta_{\mathcal T})$ on the face. 
By inclusion--exclusion,
\[
\begin{aligned}
\sum_{S\subseteq\mathcal T}\boldsymbol\psi_S(\eta_S)
&=
\sum_{R\subseteq\mathcal T}\mathbf f_R(\eta_R)
\sum_{R\subseteq S\subseteq\mathcal T}
(-1)^{|S|-|R|}\\
&=\mathbf f_{\mathcal T}(\eta_{\mathcal T}),
\end{aligned}
\]
since the inner sum is zero unless $R=\mathcal T$, in which
case it equals one.

We integrate each part over $P$ at a fixed time $t$. The part $\boldsymbol\psi_S$ depends only on $x_j$ with $j\in S$, so the integrals over the other directions give the factor $|P_S|$. In the remaining directions, substitute $x_j=t\eta_j$, so that $\prod_{j\in S}dx_j=t^{|S|}\,d\eta_S$ and the domain of integration becomes the box $\prod_{j\in S}[0,\Delta x_j/t]$. Since $t\le\Delta t\le\Delta x_j/\lambda$, this box contains the cube $[0,\lambda]^{|S|}$, outside which $\boldsymbol\psi_S$ vanishes. The box can therefore be replaced by $[0,\infty)^{|S|}$, and
\[
\int_P\boldsymbol\psi_S(\eta_S)\,dA=t^{|S|}\,|P_S|\,\mathbf D_S .
\]
Summing over $S$ proves \eqref{eq:nd-t}. Integrating over the time step, with $\int_0^{\Delta t}t^m\,dt=\Delta t^{m+1}/(m+1)$, gives \eqref{eq:nd-I}.
\end{proof}

For fixed initial states, \eqref{eq:nd-I} is therefore a polynomial
of degree at most $d$ in $\Delta t$ and of degree at most one in
each tangential side length $\Delta x_j$, within the stated
time-step restrictions. Every nonzero monomial has total degree
$d$ in the time step and the tangential side lengths.


\section{Neural network architecture}
\label{sec:network}

In this work, we focus on the two-dimensional Euler equations for an ideal gas,
\begin{equation}
\partial_t\U+\partial_x\F(\U)+\partial_y\G(\U)=\mathbf 0,
\label{eq:euler}
\end{equation}
with
\begin{equation}
\U=\begin{pmatrix}\rho\\ \rho u\\ \rho v\\ E\end{pmatrix},\qquad
\F(\U)=\begin{pmatrix}\rho u\\ \rho u^2+p\\ \rho uv\\ u(E+p)\end{pmatrix},\qquad
\G(\U)=\begin{pmatrix}\rho v\\ \rho uv\\ \rho v^2+p\\ v(E+p)\end{pmatrix},
\label{eq:euler_flux}
\end{equation}
and the equation of state $p=(\gamma-1)\big(E-\tfrac12\rho\,(u^2+v^2)\big)$ with $\gamma=1.4$. Here $\rho$ is the density, $u$ and $v$ are the velocity components in the $x$- and $y$-directions, $E$ is the total energy and $p$ is the pressure. 

For the 2-dimensional Riemann solver, we consider the four properties below:
\begin{enumerate}
  \item \textbf{Consistency (reduction to 1D):} if the data are one-dimensional across the face, the flux through the face is the one-dimensional Riemann flux $\Fone$ everywhere, so $\D$ must vanish. This includes four coinciding states.
  \item \textbf{Mirror symmetry:} reflecting the vertex problem across the line of the face must reflect $\D$, so that its mass, tangential-momentum and energy components change sign while its normal-momentum component does not.
  \item \textbf{Scaling invariance:} under certain scalings of the state variables, the wave pattern remains unchanged, and $\D$ must scale accordingly.
  \item \textbf{Rotational invariance:} a rotation of the vertex problem by $90^\circ$ maps its four faces onto each other, so the integrated flux differences of the four faces must follow from one another by rotation.
\end{enumerate}

We now describe how the properties are enforced.
\subsection{Rotational invariance}
\label{sec:net_scaling}

Rotation is enforced by learning the integrated flux difference $\D$ of the face above a vertex, Eq.~\eqref{eq:D}. The integrated flux differences of the other three faces are obtained by applying the same network to the vertex problem rotated so that the face points up, and rotating the result back. In the remainder of this section, we describe the network for the face above the vertex.

\subsection{Scaling invariance}
\label{sec:net_scaling}
Similar to~\cite{zhang2026learning}, scaling invariance is enforced through an invariant-based reduction of the inputs. Assume
\begin{equation}
\mathcal{N\!N}(\W_{RU},\W_{LU},\W_{LD},\W_{RD})=\D.
\end{equation}
The scaling laws imply that, for any $\alpha>0$ and $\beta>0$,
\begin{equation}
\mathcal{N\!N}\big(\alpha\rho_q,\ \beta u_q,\ \beta v_q,\ \alpha\beta^2p_q\big)=\alpha\,\mathrm{diag}\big(\beta^2,\beta^3,\beta^3,\beta^4\big)\,\D,
\label{eq:nn_scaling}
\end{equation}
where $q\in\{RU,LU,LD,RD\}$ label the four quadrants

Define the reference density, pressure and speed
\begin{equation}
\rho_{\mathrm{ref}}:=\frac14\sum_q\rho_q,\qquad p_{\mathrm{ref}}:=\frac14\sum_q p_q,\qquad c_{\mathrm{ref}}:=\sqrt{p_{\mathrm{ref}}/\rho_{\mathrm{ref}}},
\label{eq:refs}
\end{equation}
and choose $\alpha=1/\rho_{\mathrm{ref}}$ and $\beta=1/c_{\mathrm{ref}}$. Since $\rho_{\mathrm{ref}}c_{\mathrm{ref}}^2=p_{\mathrm{ref}}$, the normalized states
\begin{equation}
\hat\W_q:=\Big(\frac{\rho_q}{\rho_{\mathrm{ref}}},\ \frac{u_q}{c_{\mathrm{ref}}},\ \frac{v_q}{c_{\mathrm{ref}}},\ \frac{p_q}{p_{\mathrm{ref}}}\Big)
\label{eq:Whats}
\end{equation}
satisfy
\begin{equation}
\mathcal{N\!N}\big(\hat\W_{RU},\hat\W_{LU},\hat\W_{LD},\hat\W_{RD}\big)=\frac{1}{\rho_{\mathrm{ref}}}\,\mathrm{diag}\big(c_{\mathrm{ref}}^{-2},c_{\mathrm{ref}}^{-3},c_{\mathrm{ref}}^{-3},c_{\mathrm{ref}}^{-4}\big)\,\D.
\end{equation}
The normalized densities and pressures are not independent:
\begin{equation}
\sum_q\frac{\rho_q}{\rho_{\mathrm{ref}}}=4,\qquad\sum_q\frac{p_q}{p_{\mathrm{ref}}}=4 .
\end{equation}
Hence their dependence can be represented by three dimensionless contrasts each,
\begin{equation}
\boldsymbol\delta_\rho:=\frac14\,K\,\frac{\boldsymbol\rho}{\rho_{\mathrm{ref}}},\qquad
\boldsymbol\delta_p:=\frac14\,K\,\frac{\mathbf p}{p_{\mathrm{ref}}},\qquad
K:=\begin{pmatrix}1&-1&-1&\phantom{-}1\\ 1&\phantom{-}1&-1&-1\\ 1&-1&\phantom{-}1&-1\end{pmatrix},
\label{eq:contrasts}
\end{equation}
where $\boldsymbol\rho=(\rho_{RU},\rho_{LU},\rho_{LD},\rho_{RD})^{\mathsf T}$ and likewise $\mathbf p$.

This reduces the inputs to 14,
\begin{equation}
\boldsymbol\xi:=\Big(\boldsymbol\delta_\rho,\ \boldsymbol\delta_p,\ \frac{\mathbf u}{c_{\mathrm{ref}}},\ \frac{\mathbf v}{c_{\mathrm{ref}}}\Big)\in\mathbb R^{14},
\label{eq:inputs}
\end{equation}
where $\mathbf u$ and $\mathbf v$ collect the velocity components of the four states, the integrated flux difference admits the invariant representation
\begin{equation}
\D=\rho_{\mathrm{ref}}\,\mathrm{diag}\big(c_{\mathrm{ref}}^2,c_{\mathrm{ref}}^3,c_{\mathrm{ref}}^3,c_{\mathrm{ref}}^4\big)\,\hat\D(\boldsymbol\xi),\qquad\hat\D\in\mathbb R^4 .
\label{eq:Dscale}
\end{equation}
\subsection{Consistency and mirror symmetry}
\label{sec:net_consistency_symmetry}
If the data are one-dimensional across the face, $\W_{LU}=\W_{LD}$ and $\W_{RU}=\W_{RD}$, then $\D=\mathbf 0$. We impose this by a product: the MLP outputs a matrix $\mathrm{MLP}(\boldsymbol\xi)\in\mathbb R^{4\times8}$, which multiplies the jumps across the line $y=0$ on either side of the face,
\begin{equation}
\Delta\mathbf v:=\begin{pmatrix}\hat\W_{LU}-\hat\W_{LD}\\ \hat\W_{RU}-\hat\W_{RD}\end{pmatrix}\in\mathbb R^{8}.
\label{eq:Mdv}
\end{equation}
For data that are one-dimensional across the face, $\Delta\mathbf v=\mathbf 0$, so the product $\mathrm{MLP}(\boldsymbol\xi)\,\Delta\mathbf v =\mathbf 0$. 

The mirror $\sigma_x$, $x\to-x$, exchanges the left and right quadrants and reverses $u$, so $\D(\sigma_x\W)=S\,\D(\W)$ with $S=\mathrm{diag}(-1,1,-1,-1)$. Note that $\sigma_x^2=I$ and $S^2=I$. Mirror symmetry and consistency can then be enforced together via symmetrization,
\begin{equation}
\hat\D=\tfrac12\Big[\mathrm{MLP}(\boldsymbol\xi)\,\Delta\mathbf v+S\,\mathrm{MLP}(\boldsymbol\xi_\sigma)\,\Delta\mathbf v_\sigma\Big],
\label{eq:mirror}
\end{equation}
where $\boldsymbol\xi_\sigma$ and $\Delta\mathbf v_\sigma$ are the inputs and jumps of the mirrored problem $\sigma_x\W$.

For clarity, we restate the complete approximation:
{\setlength{\fboxsep}{7pt}%
\begin{empheq}[box=\fbox]{equation}
\begin{aligned}
&\D\approx\rho_{\mathrm{ref}}\,\mathrm{diag}\big(c_{\mathrm{ref}}^2,c_{\mathrm{ref}}^3,c_{\mathrm{ref}}^3,c_{\mathrm{ref}}^4\big)\,\hat\D,\\
&\hat\D=\tfrac12\big[\mathrm{MLP}(\boldsymbol\xi)\,\Delta\mathbf v+S\,\mathrm{MLP}(\boldsymbol\xi_\sigma)\,\Delta\mathbf v_\sigma\big],\qquad \mathrm{MLP}(\boldsymbol\xi)\in\mathbb R^{4\times8},\\[8pt]
&\rho_{\mathrm{ref}}=\tfrac14{\textstyle\sum_q}\rho_q,\qquad p_{\mathrm{ref}}=\tfrac14{\textstyle\sum_q}p_q,\qquad c_{\mathrm{ref}}=\sqrt{p_{\mathrm{ref}}/\rho_{\mathrm{ref}}},\\
&\hat\W_q=\big(\rho_q/\rho_{\mathrm{ref}},\ u_q/c_{\mathrm{ref}},\ v_q/c_{\mathrm{ref}},\ p_q/p_{\mathrm{ref}}\big),\\
&\boldsymbol\xi=\big(\tfrac14K\boldsymbol\rho/\rho_{\mathrm{ref}},\ \tfrac14K\mathbf p/p_{\mathrm{ref}},\ \mathbf u/c_{\mathrm{ref}},\ \mathbf v/c_{\mathrm{ref}}\big)\in\mathbb R^{14},\\
&\Delta\mathbf v=\big(\hat\W_{LU}-\hat\W_{LD},\ \hat\W_{RU}-\hat\W_{RD}\big)\in\mathbb R^{8},\\
&K=\left(\begin{smallmatrix}1&-1&-1&\phantom{-}1\\ 1&\phantom{-}1&-1&-1\\ 1&-1&\phantom{-}1&-1\end{smallmatrix}\right),\qquad S=\mathrm{diag}(-1,1,-1,-1),\\
&(\boldsymbol\xi_\sigma,\Delta\mathbf v_\sigma):\ \text{the mirrored problem's inputs and jumps}.
\end{aligned}
\label{eq:network}
\end{empheq}}%
The other three faces follow by rotation.


\section{Numerical results}
\label{sec:results}

\subsection{Training Settings}

\subsubsection{Training Data Generation}
\label{sec:data}
As in~\cite{zhang2026learning}, the training samples are not taken from simulations of the test problems. Two observations about simulations restrict the sampling to a relevant subset of all 2D Riemann problems:
\begin{enumerate}
  \item \textbf{Bounded variable range.} The states encountered in the target simulations lie within known intervals.
  \item \textbf{Similarity of the four states.} At most vertices of a simulation the four states are close, except near strong discontinuities.
\end{enumerate}
Based on these observations, we combine two sampling strategies. First, for global coverage, we draw the four states independently: $\rho$ and $p$ are log-uniform distributed on $[0.1,1.5]$ and $[0.08,2]$, respectively, and the velocity has a Mach number uniformly distributed on $[0,1.5]$ and a uniformly random direction. Second, for the many vertices with small jumps, we draw one state $(\rho_0,u_0,v_0,p_0)$, and perturb the sixteen numbers independently,
\begin{equation}
\begin{aligned}
\rho_q&=\rho_0\,(1+\varepsilon\,\delta_{q1}), &\qquad u_q&=u_0+\varepsilon\,c_0\,\delta_{q2},\\
p_q&=p_0\,(1+\varepsilon\,\delta_{q4}), & v_q&=v_0+\varepsilon\,c_0\,\delta_{q3},
\end{aligned}
\label{eq:perturbation}
\end{equation}
where $c_0=\sqrt{\gamma p_0/\rho_0}$ is the sound speed of the drawn state and the $\delta_{qi}$ are uniform in $[-1,1]$. The size $\varepsilon$ is log-uniform distributed on $[10^{-3},0.5]$, so small perturbations are more common.

Each problem is solved on $[-1,1]^2$ with $256^2$ cells by the unsplit MUSCL--Hancock scheme with the exact Riemann solver at CFL $0.4$, up to the final time
\begin{equation}
T = \frac{0.8}{\max(\lambda_c,\ s_{\max})},
\end{equation}
where $\lambda_c=\max_q\big(\sqrt{u_q^2+v_q^2}+c_q\big)$, and $s_{\max}$ is the largest absolute wave speed of the four one-dimensional Riemann problems. With this choice, the estimated fastest wave travels $80\%$ of the way across its quadrant by time $T$. $\D$ ~\eqref{eq:D} is computed from the four edges of the last snapshot. As a safeguard, we monitor the two outermost layers of cells along the edges. If their states differ meaningfully ($5\%$ of the largest initial jump) at any snapshot, $T$ is halved and the problem is solved again. Testing 2D Riemann problems on $1024^2$ cells changes $\D$ by $0.0$--$1.5\%$ across most problems. Problems whose solution would contain a vacuum are rejected. Across each face we use the vacuum condition of~\cite{toro2013riemann} (Eq.~(4.82)). In the simulation, we reject a problem if the minimum density or pressure in its snapshot falls below $10^{-3}$. No problems were rejected under our stated sampling approach. We state that vacuum states are not the focus of this work, and extending the framework to these regimes will be investigated in future work. We use 540,000 problems with close states and 60,000 problems with four independent states. Each problem yields four training samples, one per face.

\subsubsection{Neural Solver Model}
The MLP has four hidden layers of widths 32-64-64-64 with $ReLU$ activations. Since $\D$ vanishes with the jumps, the loss is a relative squared error,
\begin{equation}
\mathcal L=\frac1N\sum_{n=1}^{N}\frac{\big\|\hat\D_{\mathrm{pred},n}-\hat\D_{\mathrm{label},n}\big\|^2}{s_n^2+tol},\qquad tol=10^{-6},
\label{eq:loss}
\end{equation}
where $s_n$ is the jump size of vertex problem $n$, the root-sum-square deviation of its four normalized states~\eqref{eq:Whats} from their average,
\begin{equation}
s_n=\Big(\sum_q\big\|\hat\W_q-\overline{\hat\W}\big\|^2\Big)^{1/2},\qquad \overline{\hat\W}=\frac14\sum_q\hat\W_q .
\label{eq:jumpsize}
\end{equation}

We train with Adam and a cosine learning-rate schedule from $10^{-3}$ to $10^{-5}$ over $10\,000$ epochs, and keep the network with the lowest validation loss. Of the $2\,400\,000$ training samples, we use a 90/10 split between training and validation. 

For $\D$ itself, on the validation set, the mean relative error $\|\D_{\mathrm{pred}}-\D_{\mathrm{label}}\|/\|\D_{\mathrm{label}}\|$ is 0.7\% for problems with close states and 2.5\% for those with four independent states. We call the neural solver model SPIDERS (Structure-Preserving, Informed-by-Data Euler Riemann Solver)
\subsection{Numerical scheme}
\label{sec:scheme}

In the numerical tests, we compare two schemes, MH+ExactRS and MH+SPIDERS. They differ only in the face flux, and both use the unsplit MUSCL--Hancock method~\cite{toro2013riemann, van1984relation} with the monotonized central slope limiter. The two schemes share the Hancock predictor step, in which the reconstructed states are evolved by half a time step using the flux function evaluated at the reconstructed face midpoints. After that, MH+ExactRS uses the exact one-dimensional Riemann solver, computed from the two evolved states at the face midpoint. For MH+SPIDERS, we approximate the space--time flux integral by combining the one-dimensional face flux with the two vertex corrections. For example, for the face normal to the $x$-direction:
\begin{equation}
\mathbf I=\Delta y\,\Delta t\,\Fone+\frac{\Delta t^2}{2}\big(\D_{\mathrm{lower}}+\D_{\mathrm{upper}}\big),
\label{eq:spiders}
\end{equation}
where $\D_{\mathrm{lower}}$ and $\D_{\mathrm{upper}}$ are the integrated flux differences of the vertices at the lower and upper ends of the face, computed by SPIDERS from the four evolved corner states around each vertex. Here, $\Fone$ is the exact one-dimensional Riemann flux computed from the two evolved states at the face midpoint.

In both schemes the time step is
\begin{equation*}
\Delta t=\mathrm{CFL}\cdot\min\Big(\frac{\Delta x}{\max(|u|+c)},\ \frac{\Delta y}{\max(|v|+c)}\Big),
\end{equation*}
with the maxima taken over all cells and $c=\sqrt{\gamma p/\rho}$ the sound speed.
\subsection{Implosion}
\label{sec:implosion}
The computational domain of the implosion problem is the square $0\le x\le0.3$, $0\le y\le0.3$, with wall boundary conditions on all four sides. The initial condition consists of two constant states separated by a diagonal interface,
\begin{equation*}
(\rho,u,v,p)(x,y,0)=
\begin{cases}
(1.0,\,0,\,0,\,1.0), & x+y>0.15,\\[2pt]
(0.125,\,0,\,0,\,0.14), & x+y\le0.15,
\end{cases}
\end{equation*}
with $\gamma=1.4$, and the solution is computed up to $t=2.5$. The problem was popularized by Liska and Wendroff~\cite{liska2003comparison} to compare eight numerical schemes. It is challenging because it involves strong shock interactions, symmetry-driven wave focusing and the formation of a thin jet along the diagonal. The distance the jet travels is a measure of the numerical diffusion of a scheme~\cite{Skinner2019Fornax}. As in the original study, we use $400\times400$ cells. We compare MH+ExactRS at CFL~0.45 with MH+SPIDERS at CFL~0.9, so MH+SPIDERS needs half as many time steps (7329 against 14496).

Figure~\ref{fig:implosion} shows the pressure and the density contours at $t=2.5$ in the format of~\cite{liska2003comparison}. The two schemes produce the same pattern of reflected shocks and contact discontinuities, but the jet travels farther with MH+SPIDERS. Its head, the outermost vortex pair along the diagonal, reaches $x=y\approx0.158$ with MH+SPIDERS against $0.149$ with MH+ExactRS, about 6\% farther from the corner. Since numerical diffusion slows the jet~\cite{Skinner2019Fornax}, this indicates that MH+SPIDERS is less diffusive than MH+ExactRS.

\begin{figure}[!htbp]
  \centering
  \includegraphics[width=\linewidth]{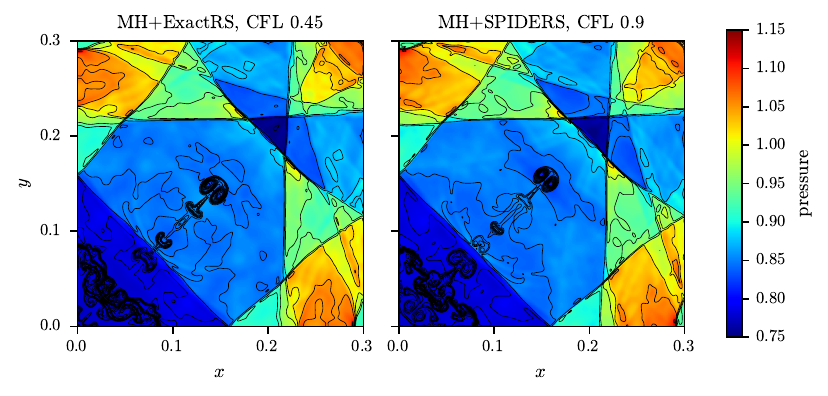}
  \caption{Implosion at $t=2.5$ on $400\times400$ cells: pressure (colour) and 31 density contours from 0.35 to 1.1. Left: MH+ExactRS at CFL~0.45. Right: MH+SPIDERS at CFL~0.9.}
  \label{fig:implosion}
\end{figure}

\subsection{Cylindrical explosion}
\label{sec:explosion}
The cylindrical explosion of Toro~\cite[Section~17.1]{toro2013riemann} has a radially symmetric solution, and it allows the error of a scheme to be measured against a one-dimensional reference solution. The computational domain is the square $0\le x\le2$, $0\le y\le2$, with transmissive boundaries. The gas is initially at rest, with
\begin{equation*}
(\rho,p)(x,y,0)=
\begin{cases}
(1.0,\,1.0), & (x-1)^2+(y-1)^2<0.4^2,\\[2pt]
(0.125,\,0.1), & \text{otherwise},
\end{cases}
\end{equation*}
and $\gamma=1.4$. As in~\cite{toro2013riemann}, cells cut by the circle receive the area-weighted average of the two conserved states, and we use $101\times101$ cells up to $t=0.25$. The reference solution is computed from the radially symmetric Euler equations, which carry the geometric source term $-(\rho u_r,\ \rho u_r^2,\ u_r(E+p))/r$, on $96\,000$ cells with the one-dimensional MUSCL--Hancock scheme. We compare MH+ExactRS at CFL~0.45 with MH+SPIDERS at CFL~0.9. MH+ExactRS fails at CFL~0.9 after five time steps. MH+SPIDERS needs 29 time steps against 58.

Figure~\ref{fig:explosion} shows the density at $t=0.25$. Both schemes capture the outgoing shock and contact discontinuity and the rarefaction that moves towards the centre. On this coarse grid, the density contours at the contact show an imprint of the grid in both schemes.

\begin{figure}[!htbp]
  \centering
  \includegraphics[width=\linewidth]{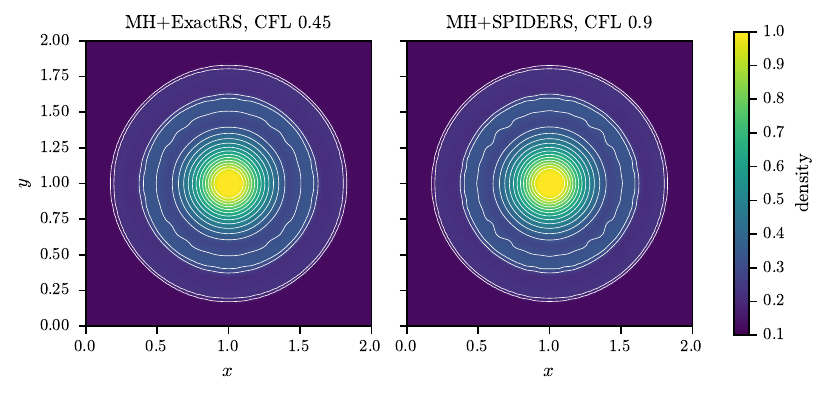}
  \caption{Cylindrical explosion at $t=0.25$ on $101\times101$ cells: density with 15 contours from 0.15 to 0.95. Left: MH+ExactRS at CFL~0.45. Right: MH+SPIDERS at CFL~0.9.}
  \label{fig:explosion}
\end{figure}

Figure~\ref{fig:explosion_radial} compares the solutions to the one-dimensional reference as functions of the distance $r$ from the center. The cells of MH+SPIDERS lie close to the reference (top row). Near the shock, the profile of MH+ExactRS is smeared over a slightly wider range of $r$ than that of MH+SPIDERS (mid row). The cumulative $L_1$ error is plotted against $r$ (bot row). It shows that the error is concentrated in the rarefaction, between its head ($r\approx0.1$) and its tail ($r\approx0.42$), at the contact ($r\approx0.62$), and at the shock ($r\approx0.82$). MH+SPIDERS is more accurate in the rarefaction and at the shock, and the two schemes are equally accurate at the contact. Over the entire domain, the $L_1$ errors of MH+SPIDERS are 9.0\%, 14.9\%, and 13.5\% lower than those of MH+ExactRS for the density, radial velocity, and pressure. A similar error difference is observed on $201\times201$, $401\times401$, and $801\times801$ cells (Table~\ref{tab:explosion}).

\begin{figure}[!htbp]
  \centering
  \includegraphics[width=\linewidth]{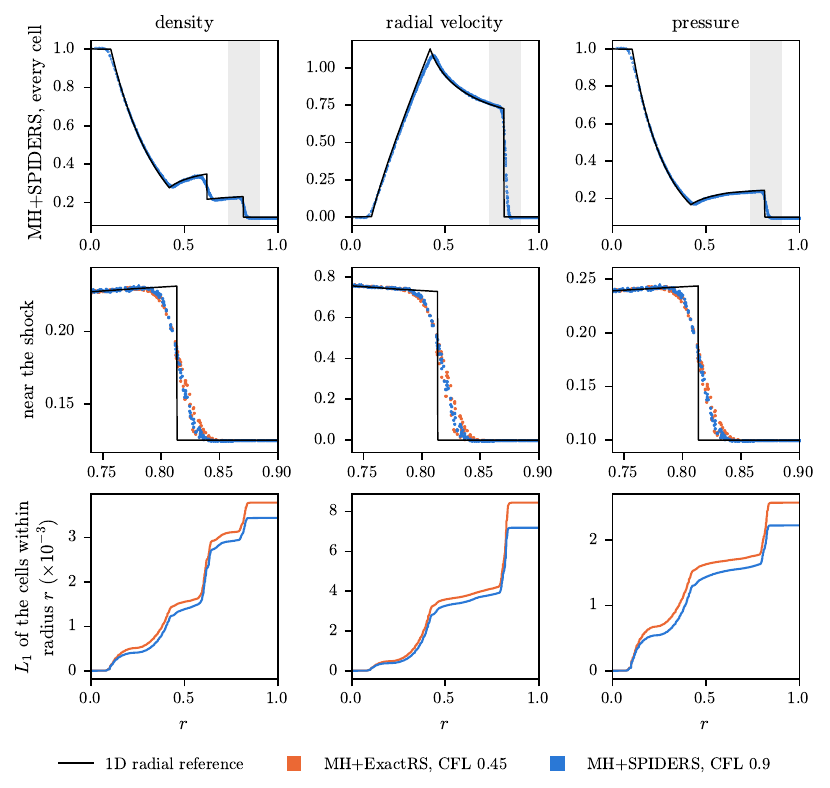}
  \caption{Cylindrical explosion at $t=0.25$ on $101\times101$ cells, against the distance $r$ from the centre, for MH+ExactRS at CFL~0.45 and MH+SPIDERS at CFL~0.9. Top: density, radial velocity and pressure of every cell of MH+SPIDERS and the one-dimensional radial reference. Middle: every cell of both schemes near the shock (shaded in the top row). Bottom: the cumulative $L_1$ error of the cells within radius $r$.}
  \label{fig:explosion_radial}
\end{figure}

\begin{table}[!htbp]
  \centering
  \caption{Cylindrical explosion at $t=0.25$: $L_1$ errors ($\times10^{-3}$) against the one-dimensional radial reference averaged over each cell, for MH+ExactRS at CFL~0.45 and MH+SPIDERS at CFL~0.9.}
  \label{tab:explosion}
  \setlength{\tabcolsep}{4.5pt}
  \begin{tabular}{rcccccc}
    \toprule
    & \multicolumn{2}{c}{density} & \multicolumn{2}{c}{radial velocity} & \multicolumn{2}{c}{pressure}\\
    \cmidrule(lr){2-3}\cmidrule(lr){4-5}\cmidrule(lr){6-7}
    Grid & ExactRS & SPIDERS & ExactRS & SPIDERS & ExactRS & SPIDERS\\
    \midrule
    $101^2$ & 3.79 & 3.44 & 8.45 & 7.19 & 2.57 & 2.22\\
    $201^2$ & 2.07 & 1.89 & 4.43 & 3.80 & 1.33 & 1.13\\
    $401^2$ & 1.15 & 1.05 & 2.21 & 1.91 & 0.67 & 0.57\\
    $801^2$ & 0.646 & 0.594 & 1.13 & 0.984 & 0.339 & 0.290\\
    \bottomrule
  \end{tabular}
\end{table}

\subsection{CFL number and computational cost}
\label{sec:cost}

Figure~\ref{fig:cfl} presents the density errors of both schemes for CFL numbers ranging from 0.4 to 1.1 in increments of 0.05, for the implosion and explosion tests. Because the implosion has no exact solution, its error is measured against a run of MH+ExactRS at CFL~0.45 on $800\times800$ cells. For the explosion, we use the one-dimensional radial reference described in Section~\ref{sec:explosion}. MH+ExactRS fails at CFL~0.75 in the implosion test and at CFL~0.8 in the explosion test. The implosion error of MH+ExactRS begins to increase rapidly near CFL~0.55, whereas the explosion error begins to increase rapidly at approximately CFL~0.75.

\begin{figure}[!htbp]
  \centering
  \includegraphics[width=\linewidth]{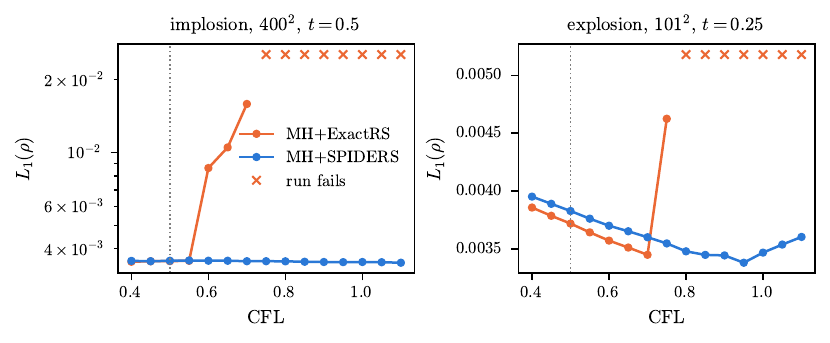}
  \caption{Density error against the CFL number for MH+ExactRS and MH+SPIDERS: the implosion at $t=0.5$ on $400\times400$ cells (left) and the explosion at $t=0.25$ on $101\times101$ cells (right). Crosses mark runs that fail. The dotted line marks CFL~0.5.}
  \label{fig:cfl}
\end{figure}

Table~\ref{tab:cost} reports the wall-clock times of both schemes on a single NVIDIA GH200 GPU. MH+SPIDERS is faster on every grid. It requires half as many time steps as MH+ExactRS, although each time step is more computationally expensive because vertex corrections are computed in addition. On the coarsest grids, the GPU is not fully utilized, and MH+SPIDERS requires about 40\% less time. When the GPU is fully utilized, MH+SPIDERS requires about 30\% less time.  

\begin{table}[!htbp]
  \centering
  \caption{Wall times in seconds of MH+ExactRS at CFL~0.45 and MH+SPIDERS at CFL~0.9 on one NVIDIA GH200 GPU, for the implosion up to $t=2.5$ and the explosion up to $t=0.25$. The time of MH+SPIDERS relative to MH+ExactRS is given in brackets.}
  \label{tab:cost}
  \setlength{\tabcolsep}{4.5pt}
  \begin{tabular}{lcccc}
    \toprule
    Implosion & $100^2$ & $200^2$ & $400^2$ & $800^2$\\
    \midrule
    MH+ExactRS & 1.35 & 3.11 & 12.6 & 122\\
    MH+SPIDERS & 0.755 (56\%) & 1.91 (61\%) & 9.24 (73\%) & 85.7 (70\%)\\
    \midrule
    Explosion & $101^2$ & $201^2$ & $401^2$ & $801^2$\\
    \midrule
    MH+ExactRS & 0.0257 & 0.0565 & 0.229 & 1.77\\
    MH+SPIDERS & 0.0148 (58\%) & 0.0355 (63\%) & 0.169 (74\%) & 1.29 (73\%)\\
    \bottomrule
  \end{tabular}
\end{table}

\section{Conclusions}
\label{sec:conclusions}
We have shown that, in any number of space dimensions and for CFL numbers up to one, the exact space--time flux integral across the face of a multidimensional Riemann problem is a polynomial in the time step and the face dimensions. Learning a multidimensional Riemann solver therefore reduces to learning a map from the initial states to the coefficients of this polynomial. Building on this result, we proposed a structure-preserving, data-informed approach to multidimensional Riemann solvers, in which a neural network is trained on high-resolution simulations of the Riemann problem computed once and offline. The neural Riemann solver enforces crucial properties, such as consistency, mirror symmetry, and rotational invariance. Unlike many classical multidimensional Riemann solvers, whose accuracy is limited by simplified models, the neural framework approaches the exact flux integral, provided that the high-resolution simulations converge to it and that the network predicts the coefficients accurately.

We demonstrated this approach in an unsplit MUSCL--Hancock scheme for the two-dimensional Euler equations. The neural Riemann solver, SPIDERS (Structure-Preserving, Informed-by-Data Euler Riemann Solver), allows a CFL number of 0.9, at which the same scheme with the exact one-dimensional Riemann solver fails. Compared with that scheme at a CFL number of 0.45, MH+SPIDERS needs half as many time steps and is more accurate. In the implosion problem, its jet travels about 6\% farther, indicating less numerical diffusion. In the cylindrical explosion, it preserves the problem's radial symmetry as well as MH+ExactRS, its $L_1$ errors are 8--15\% lower on all grids from $101\times101$ to $801\times801$ cells, and it converges at the same rate as MH+ExactRS. MH+SPIDERS takes about 30\% less computational time than MH+ExactRS on the finer grids.

Several limitations remain. A network is reliable only for states similar to those it was trained on: the structural properties of SPIDERS hold for every input, but its accuracy on states outside the training distribution is not guaranteed. The training data should therefore cover the range of states that a simulation produces, for example by estimating this range in advance or by generating data adaptively from states encountered in simulations. We have not considered states near vacuum. We have demonstrated the framework only for the two-dimensional Euler equations and compared it only with the same scheme using the exact one-dimensional Riemann solver, not with other multidimensional Riemann solvers. 

Future work will apply the framework in three spatial dimensions and to more complex systems, such as magnetohydrodynamics, address states near vacuum, and compare the resulting solvers with existing multidimensional Riemann solvers.

\section*{CRediT Authorship Contribution Statement}
\textbf{Yucheng Zhang:} Conceptualization, Methodology, Software, Validation, Formal analysis, Writing – original draft, Writing – review \& editing, Visualization. 

\textbf{Clint Dawson:} Conceptualization, Methodology, Writing – review \& editing, Supervision, Project administration, Funding acquisition.

\section*{Declaration of Competing Interest}
The authors declare that they have no known competing financial interests or personal relationships that could have appeared to influence the work reported in this paper.

\section*{Acknowledgments}
This work was supported by the U.S. National Science Foundation under Award No. 2208461.

The authors acknowledge the Texas Advanced Computing Center (TACC) at The University of Texas at Austin for providing computational resources that have contributed to the research results reported within this paper. URL: http://www.tacc.utexas.edu

\section*{Data Availability}
Data will be made available on request.



\bibliographystyle{elsarticle-num}
\bibliography{references}

\end{document}